\newtheorem{theorem}{Theorem}
\newtheorem{lemma}[theorem]{Lemma}
\newtheorem{proposition}[theorem]{Proposition}
\newtheorem{definition}[theorem]{Definition}
\newtheorem{Open question}{Open Question}
\begin{document}

\title{Multi-games and a double  game extension of the Prisoner's Dilemma}

\author{\IEEEauthorblockN{Abbas Edalat}
\IEEEauthorblockA{Department of Computing\\
Imperial College London\\
London, UK\\
Email: a.edalat@imperial.ac.uk}
\and
\IEEEauthorblockN{Ali Ghoroghi}
\IEEEauthorblockA{Department of Computing\\
Imperial College London\\
London, UK\\
Email: ali.ghoroghi@imperial.ac.uk}
\and
\IEEEauthorblockN{Georgios Sakellariou}
\IEEEauthorblockA{Department of Computing\\
Imperial College London\\
London, UK\\
Email: georgios.sakellariou06@imperial.ac.uk}}

\maketitle

\begin{abstract}

We propose a new class of games, called Multi-Games (MG), in which a given number of players play a fixed number of basic games simultaneously.  Each player can have different sets of strategies for the different basic games. In each round of the MG, each player will have a specific set of weights, one for each basic game, which add up to one and represent the fraction of the player's investment in each basic game. The total payoff for each player is then the convex combination, with the corresponding weights, of the payoffs it obtains in the basic games.  The basic games in a MG can be regarded as different environments for the players, and, in particular, we submit that MG can be used to model investment in a global economy with different national or continental markets. When the players' weights for the different games in MG are private information or types with given conditional probability distributions, we obtain a particular class of Bayesian games. We show that for the class of so-called completely pure regular Double Game (DG) with finite sets of types, the Nash equilibria (NE) of the basic games can be used to compute a Bayesian Nash equilibrium of the DG in linear time with respect to the number of types of the players.  We study a DG for the Prisoner's Dilemma (PD) by extending the PD  with a second so-called Social Game (SG), generalising the notion of altruistic extension of a game in which players have different altruistic levels (or social coefficients). We show that, with respect to the SG we choose, the payoffs for PD give rise to two different types of DG's. We study two different examples of Bayesian games in this context in which the social coefficients have a finite set of values and each player only knows the probability distribution of the opponent's social coefficient. In the first case we have a completely pure regular DG for which we deduce a Bayesian NE. Finally, we use the second example to compare various strategies in a round-robin tournament of the DG for PD, in which the players can change their social coefficients incrementally from one round to the next.

\end{abstract}
\IEEEpeerreviewmaketitle

\section{Introduction}
Game theory was originally introduced to model the behaviour of rational agents in which players make independent decisions in order to maximise their utility or payoff in an economy~\cite{neumann1944}.  The notion of Nash equilibrium (NE) has become the key concept in game theory since John Nash's celebrated proof of existence of a mixed Nash equilibrium for all finite games in 1950~\cite{Nash1950}. A similar notion of Bayesian NE is also at the basis of games with incomplete information as shown by Harsanyi in 1960's~\cite{Harsanyi1995}.  

However, such an equilibrium strategy for players can only be useful for determining or predicting economic behaviour if it can be efficiently computed, whereas it has become clear in a number of papers that computation of a NE or even an approximate $\epsilon$-Nash equilibrium is in general a computationally hard problem~\cite{Daskalakis06,Etessami10}. Clearly, if the computation of a NE is unfeasible because of its high complexity, then its existence, despite having theoretical significance, has no value in practice. 
It is thus useful to have models in game theory for which the computation of a Nash equilibrium can be more efficiently done than in general. 

In this paper, we propose a class of games called Multi-Games (MG) which can be used to model economic behaviour when each player can allocate its resources in varying proportions to play in a number of different environments, each representing a basic game in its own right.  Each player can have different sets of strategies for the different basic games. The payoff of each player in a MG is assumed to be the convex linear combination of payoffs obtained for the basic games weighted by the allocated proportions to them. Here we first present a simple example of a Double Game (DG) with two basic games. 

Consider two multinational companies which can invest in the national economies of two different countries $R_1$ and $R_2$ with different cost of investment, rates of profit, labour value, interest rates etc. Suppose they need to decide in what ratio to divide their assets for investment in the two countries and, in addition, whether to enter into a particular venture or simply deposit their allocated assets in some bank in each country. We thus have two games, $G_1$ for $R_1$ and $G_2$ for $R_2$, one for each national economy, each with two players and two strategies for entering ($E$) and not entering ($N$).  Let $\pi_{ij}$ denote the payoff function in $G_i$ for player $j$ (with $i,j=1,2$). 

Suppose the first player invests $\lambda$ and $1-\lambda$ fractions of its assets in $R_1$ and $R_2$ respectively, and assume $\gamma$ and $1-\gamma$ are the corresponding fractions for the second player. Then, the payoff to the first player for the strategy profiles $(X_1,Y_1)$ in $G_1$ and $ (X_2,Y_2)$ in $G_2$, with $X_i,Y_i\in \{E,N\}$ for $i=1,2$,  would be \[\lambda \pi_{11}(X_1,Y_1) +(1-\lambda)\pi_{21}(X_2,Y_2),\] whereas the payoff for the second player for the same strategy profile would be \[\gamma \pi_{12}(X_1,Y_1) +(1-\gamma)\pi_{22}(X_2,Y_2).\] Now if the coefficients $\lambda$ and $\gamma$ are both private information and can each take only a finite number of values between zero and one, then the DG is reduced to a Bayesian game with a finite set of types for the two players and we can look for a Bayesian Nash equilibrium. The values $0$ and $1$ will always be included in the set of possible types of each player and are called the {\em extreme} types. 

In the next section, we will define MG in general and then, for convenience and ease of presentation, we restrict ourselves to the class of 2-player DG in which each player has the same  set of strategies in the two basic games. We later define the class of {\em pure regular} DG in which for the pairs of  extreme types there are four pure NE in which the strategy of each player only depends on its own type.  Similarly, for a DG with $k$ and $\ell$ types for the two players respectively, we define the notion of a  {\em completely pure regular} DG where there are $k\times \ell$ pure NE for all possible pairs of types of the two players in which the strategy of each player only depends on its own type. We then derive a test for establishing that a DG  is completely pure regular in linear time with respect to the number of types of the players and show that a pure Bayesian equilibrium for completely pure regular DG can be obtained directly from this test, thus reducing the complexity of computation. 

We will then apply this framework to obtain a double game extension of the Prisoner's Dilemma (PD) to model prosocial behaviour. In this DG for PD, the first game is the classical PD and the second game captures the social or moral gain for cooperation for each player. We now review the concept of prosocial behaviour and moral gain in PD. 

The PD is considered a standard method for modelling social dilemmas~\cite{OB2007,SG1970}. In the 1980's, Axelrod organised two international round-robin tournaments in which strategies for the repeated PD competed with each other~\cite{AE1980,AM1980}. In the competition, tit-for-tat, i.e., cooperate on the first move and then reciprocate the opponent's last move, proved to be robust and became the overall winner of the tournaments~\cite{AE1980,AM1980}. Axelrod then promoted tit-for-tat, and the four associated characteristics of (i) be nice, (ii) reciprocate, (iii) don't be envious, (iv) don't be too clever, as the way reciprocal altruism has evolved~\cite{The84}. The PD has also been used to model conditional altruistic cooperation, which has also been tested by real monetary payoffs~\cite{Gintis2009}. 

However, when confronted with the choice to cooperate or defect, human beings not only consider their material score, but also the social and moral payoffs of any decision they make. This means that the material payoffs presented in the PD cannot provide a complete picture of the decision making process human beings follow. In fact, according to some researchers, human social evolution has a moral direction which has extended our moral campus in the course of thousands of years of increasing communication and complexity in human social organisation~\cite{wright2000,wright2009}. Moreover, there are individual and temporal variations in pro-social attitudes of human beings with some making decisions more based on self-interest than others. A more adequate model of human behaviour should take into account these aspects of social evolution as well. The same applies to economic decisions by corporations or governments, in which actions taken can have significant social and environmental implications, which are not incorporated in the material gains they produce. In~\cite{sheng94}, it was proposed that a {\em coefficient of morality} be introduced to the PD and the payoff values of the players be accordingly changed. The so-called {\em altruistic extension} of any finite strategic game was defined in~\cite{Chen2011}, which endows each player with an altruistic level in the unit interval which provides the weight of the pro-social attitude of the player.  This modification aims to reflect real-life situations and dilemmas more accurately by taking into account both material and moral/social gains. Thus, for each player, the payoff is a weighted, linear combination of the payoffs for the PD and the SG. In essence, it is a convex combination of the payoffs occurring from both material and social dilemmas.

In this paper, we show that the DG, as an instance of MG, provides a generalisation of the altruistic extension in~\cite{Chen2011} which can be considered as a DG with the first game identified as the original game and the second game as a symmetric altruistic game. In a general DG, the social or altruistic game is allowed to be non-symmetric, which means that in general the altruistic payoffs for the different players may be different even for the same strategy profile. 

We furthermore consider the DG for the PD where the social (altruistic) coefficient of each player forms a finite discrete set of incomplete information or types thus giving rise to a Bayesian game. We prove that this DG is in fact pure regular and determine its Bayesian equilibrium when it is completely pure regular.

\section{Multi Games}

The purpose of using this model is to add a new dimension to the description of a range of situations, achieved through the employment of game theoretic models. This is done by linearly combining the payoff matrices of various games and linking them through the use of a coefficient for each player, which represents the amount of investment that a player is willing to commit in that particular game. More specifically MG is defined as follows. Consider $M$ finite $N$-player games $G_i$ ($1\leq i\leq M$) with the strategy set $S_{ij}$ and payoff matrix $\pi_{ij}$ for player $j$ ($1\leq j\leq N$) in the game $G_j$. Assume each player $j$ is equipped with a set of $M$ weights $\lambda_{ij}$ with $\sum_{j=1}^M\lambda_{ij}=1$. We define the MG N-player game $G$ with basic games $G_i$ as the finite strategy game with players $j$  ($1\leq j\leq N$) each having strategy set $\prod_{1\leq i\leq M}S_{ij}$ and payoff \[\pi_j(\prod_{1\leq i\leq M}s_{ij})=\sum_{1\leq i\leq M}\lambda_{ij}\pi_{ij}(s_{ij})\] for strategy profile $s_{ij}\in S_{ij}$ and possibly incomplete information (types) $\lambda_{ij}$ for $1\leq i\leq M$. We say the MG is {\em uniform} if for each player $j$, the set $S_{ij}$ is independent of the game $G_i$, i.e., we can write $S_{ij}=S_j$, and $s_{ij}=s_{i'j}$ for $1\leq i\leq i'\leq M$.

Harsanyi~\cite{Harsanyi1967-8} suggested a method to solve games with incomplete information by transforming it to a game with imperfect information, in which a probability distribution for each unknown value, referred to as a {\em type}, is provided. Such a game is called a Bayesian game and a Bayesian Nash equilibrium is then a Nash equilibrium in a Bayesian game~\cite{zz}.

In this extended abstract we restrict ourselves to uniform MG with two players $N=2$. We now give a simple example in this case . Assume that the strategy set for each player consists of actions C and D and denote the weights for players 1 and 2 respectively by $\lambda_{ij}$ and $\gamma_{ij}$, with $1\leq i\leq M$ and $j=1, 2$. If the payoff matrix for the basic game $G_i$ is given as in TABLE~\ref{tab:1}, then the payoff matrix for the MG $G$ will be given as in TABLE~\ref{tab:2}.

\begin{table}[ht]\renewcommand{\arraystretch}{2}
\centering
\begin{tabular}{cccc}
\multicolumn{1}{c}{} & \multicolumn{1}{c}{}& \multicolumn{2}{c}{Player 2} \\ 
\cline{3-4}
\multicolumn{1}{c}{} & &\multicolumn{1}{|c|}{C}  &\multicolumn{1}{c|}{D}  \\ 
\cline{2-4}
\multirow{2}{*}{Player 1} & \multicolumn{1}{|c}{C} & \multicolumn{1}{|c|}{($a_{i1}, a_{i2}$)} & \multicolumn{1}{c|}{($b_{i1}, b_{i2}$)} \\ 
\cline{2-4}
& \multicolumn{1}{|c}{D} & \multicolumn{1}{|c|}{($c_{i1}, c_{i2}$)} & \multicolumn{1}{c|}{($d_{i1}, d_{i2}$)}\\
\cline{2-4}
\end{tabular}\footnotesize\caption{Payoff matrix of the basic games $G_i$}\label{tab:1}
\end{table}

\begin{table}[ht]\renewcommand{\arraystretch}{1.7}
\centering
\begin{tabular}{cccc}
\multicolumn{1}{c}{} & \multicolumn{1}{c}{}& \multicolumn{2}{c}{} \\
\cline{3-4}
\multicolumn{1}{c}{} & & \multicolumn{1}{|c|}{C} & \multicolumn{1}{c|}{D} \\ 
\cline{2-4}
\multirow{2}{*}{} & \multicolumn{1}{|c|}{C} &\multicolumn{1}{|@{}c@{}|}{$\sum_{i=1}^M\lambda_i$$a_{i1}$ , $\sum_{i=1}^M\gamma_i$$a_{i2}$}  & \multicolumn{1}{|@{}c@{}|}{ $\sum_{i=1}^M\lambda_i$$b_{i1}$,  $\sum_{i=1}^M\gamma_i$$b_{i2}$ } \\ 
\cline{2-4}
& \multicolumn{1}{|c|}{D} & \multicolumn{1}{|@{}c@{}|}{ $\sum_{i=1}^M\lambda_i$$c_{i1}$,  $\sum_{i=1}^M\gamma_i$$c_{i2}$} & \multicolumn{1}{@{}c@{}|}{ $\sum_{i=1}^M\lambda_i$$d_{i1}$,  $\sum_{i=1}^M \gamma_i$$d_{i2}$} \\
\cline{2-4}
\end{tabular}\scriptsize\caption{Payoff matrix of the MG}\label{tab:2}
\end{table}

\subsection{Coherent pairs of NE}
For finite sets of types, we assume that the weights $\lambda_{ij}$ for each player $j$ are
selected from a finite discrete set, which would denote the types of
each player. When these values are private information we have a
Bayesian game.

For simplicity in this paper, we suppose our MG is
uniform with $M=2$ and $N=2$; our results however do extend to the
more general case. Since we now have two basic games $G_1$ and $G_2$ and two players,
we can denote the weights of the first player by $1-\lambda$ and
$\lambda$, and those of the second player by $1-\gamma$ and
$\gamma$. This means that for $\lambda=0$ the first player invests totally in the first game $G_1$ whereas when  $\lambda=1$, the first player invests totally in the second game $G_2$. Similarly for the second player with weight $\gamma$. 

Thus, in the finite discrete case, the finite set of types of each player is 
given by a set of increasing values, say $\lambda_m$ ($1\leq m\leq k$)
for $\lambda$ and a set of increasing values, say $\gamma_n$ ($1\leq
n\leq \ell$) for $\gamma$, where each type is restricted to the unit
interval. We assume in the discrete case that we always have $0$ and $1$ as types for each
player, i.e., $\lambda_1=\gamma_1=0$ and $\lambda_k=\gamma_\ell=1$, which we call the {\em extreme} types. We let $G^{(\lambda,\gamma)}$ denote the DG game $G$ with the types taking the specific values $\lambda$ and $\gamma$. In the discrete case, we in addition let $G^{mn}$ denote the DG game $G$ with the types $\lambda_m$ and $\gamma_n$ selected for the two players respectively. We refer to a NE for $G^{(\lambda,\gamma)}$ as a {\em local} NE for the DG $G$.

Assume we have a uniform two-player DG $G$ with basic games $G_1$ and $G_2$. Given a player $j$, we denote as usual the strategy set of the opponent of $j$ by $S_{-j}$. 

\begin{definition}
The DG $G$ has a {\em coherent} pair of {\em pure NE} for a player $j$ with a given type if there is an action $s\in S_j$ and actions $u,v\in S_{-j}$ such that $(s,u)$ and $(s,v)$ are respectively pure NE for 
$G$ with the given type for player $j$ and the two extreme types of the other player. In this case, the pair of profiles $((s,u),(s,v))$ is called the {\em coherent pair} of pure NE for player $j$ with the given type.
\end{definition}
For example, suppose we fix a type $\lambda_m$ with $1\leq m\leq k$ for the first player, then the DG has a coherent pair of pure strategies for the first player with type $\lambda_m$ if 
there are actions $s\in S_1$ and $u,v\in S_2$ such that $(s,u)$ is a pure NE for $G^{m1}$ and $(s,v)$ is a pure NE for $G^{m\ell}$.

\begin{proposition}\label{pure-coherent}
If the DG has a coherent pair  $((s,u),(s,v))$ of pure NE for the first player with type $\lambda_m$, then there exists an integer $p$ with $1\leq p\leq \ell$ such that  $G^{mn}$ has $(s,u)$ as a pure NE for $1\leq n\leq p$ and has $(s,v)$ as a pure NE for $p<n\leq \ell$.
\end{proposition}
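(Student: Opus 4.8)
The plan is to reduce the entire statement to a one-dimensional monotonicity fact about player~2's best response, exploiting that in a uniform DG the two players' types enter the payoffs independently. First I would fix the type $\lambda_m$ of player~1 and record that, against any opponent action $b$, player~1's total payoff for playing $a$ is $(1-\lambda_m)\pi_{11}(a,b)+\lambda_m\pi_{21}(a,b)$, an expression that does not involve $\gamma$. Since $(s,u)$ is a pure NE of $G^{m1}$ and $(s,v)$ is a pure NE of $G^{m\ell}$, the action $s$ is already a best response for player~1 against $u$ and against $v$; because those best-response inequalities are independent of $\gamma$, they continue to hold in every $G^{mn}$. Hence, for each $n$, the profile $(s,u)$ (respectively $(s,v)$) is a pure NE of $G^{mn}$ if and only if $u$ (respectively $v$) is a best response of player~2 to $s$ at type $\gamma_n$. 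This collapses a two-parameter equilibrium question to a single-parameter best-response question, which I would isolate as the first step.

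Next I would analyse player~2's incentives as a function of the continuous parameter $\gamma$. Setting
\[
\phi(\gamma)=\big[(1-\gamma)\pi_{12}(s,u)+\gamma\pi_{22}(s,u)\big]-\big[(1-\gamma)\pi_{12}(s,v)+\gamma\pi_{22}(s,v)\big],
\]
this is an affine function of $\gamma$. From the equilibria at the extreme types I read off $\phi(0)\ge 0$ (since $u$ beats $v$ against $s$ at $\gamma=0$) and $\phi(1)\le 0$ (since $v$ beats $u$ against $s$ at $\gamma=1$). An affine function with $\phi(0)\ge 0\ge\phi(1)$ is non-increasing, so there is a threshold $\gamma^{\ast}\in[0,1]$ with $\phi(\gamma)\ge 0$ for $\gamma\le\gamma^{\ast}$ and $\phi(\gamma)\le 0$ for $\gamma\ge\gamma^{\ast}$; the degenerate constant case $\phi\equiv 0$ is absorbed by taking $\gamma^{\ast}=1$. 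Because the types are listed in increasing order with $\gamma_1=0$, I would then set $p=\max\{\,n:\gamma_n\le\gamma^{\ast}\,\}$, which is well defined since $\gamma_1=0\le\gamma^{\ast}$, and which satisfies $1\le p\le\ell$. For $n\le p$ we have $\phi(\gamma_n)\ge 0$, so $u$ is a best response to $s$ and $(s,u)$ is a NE of $G^{mn}$; for $n>p$ we have $\phi(\gamma_n)\le 0$, so $v$ is a best response and $(s,v)$ is a NE, which is exactly the claimed partition.

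The delicate step, and the one I expect to be the main obstacle, is justifying that for player~2 facing $s$ the phrase ``beats the other candidate'' coincides with ``is a best response'', i.e. that no third action of player~2 becomes uniquely optimal at some intermediate type and thereby destroys both $(s,u)$ and $(s,v)$ as equilibria. In the games that motivate the paper, where each player has only the two actions $C$ and $D$, this is immediate: $u$ and $v$ are the sole actions of player~2, so $\phi\ge 0$ literally means $u$ is a best response and $\phi\le 0$ means $v$ is, and the threshold argument applies verbatim. For larger strategy sets one would have to argue in addition that the upper envelope of player~2's payoff lines against $s$ remains supported by $u$ and $v$ throughout $[0,1]$; since each of those payoffs is affine in $\gamma$, the natural tool is a single-crossing comparative-statics argument, and I would either invoke the two-action restriction under which the proposition is actually used or add the hypothesis ruling out an interior action dominating both candidates. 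Once that point is secured, the index $p$ constructed above yields the required partition and the proof is complete.
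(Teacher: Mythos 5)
Your argument is correct, and there is nothing in the paper to compare it against: Proposition~\ref{pure-coherent} is stated without proof. The technique you use is, however, exactly the one the paper deploys for the companion mixed-NE proposition (and implicitly intends here): player~1's incentives at fixed type $\lambda_m$ do not involve $\gamma$, so $s$ stays a best response to both $u$ and $v$ in every $G^{mn}$, and the whole question reduces to the sign of the affine difference $\phi(\gamma)$ of player~2's payoffs against $s$; the conditions $\phi(0)\ge 0$, $\phi(1)\le 0$ give a threshold $\gamma^{\ast}$, and your choice $p=\max\{n:\gamma_n\le\gamma^{\ast}\}$ is well defined and delivers the partition because the $\gamma_n$ are increasing with $\gamma_1=0$ and $\gamma_\ell=1$. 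The caveat in your last paragraph is genuine and worth stressing rather than apologising for: with $|S_2|\ge 3$ the proposition as stated is actually false. For instance, if player~2's payoff lines against $s$ are $f_u(\gamma)=1-2\gamma$, $f_v(\gamma)=2\gamma-1$ and $f_w(\gamma)=\tfrac12$, then $u$ is optimal at $\gamma=0$ and $v$ at $\gamma=1$, but $w$ is the unique best response at $\gamma=\tfrac12$, so neither $(s,u)$ nor $(s,v)$ is a local NE there. Hence the two-action restriction (under which all of the paper's examples and the proof of the mixed version are carried out), or a hypothesis excluding a third action from overtaking the upper envelope, is needed; in that setting your proof is complete.
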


There is also a version of the above result for mixed NE, which holds for the general case of continuous types $\lambda$ for player 1 and $\gamma$ for player 2. We denote the set of mixed strategies over the strategy set $S$ by ${\cal M}(S)$.

\begin{definition}
The DG $G$ has a {\em coherent} pair of  {\em mixed NE} for a player $j$ with a given type if there is a mixed strategy $\sigma\in {\cal M}(S_i)$ and mixed strategies $\sigma_0,\sigma_1\in {\cal M}(S_{-j}$) such that $(\sigma,\sigma_0)$ and $(\sigma,\sigma_1)$ are respectively mixed NE for
$G$ with the given type for player $j$ and the two extreme types of the other player. In this case, the pair of profiles $((\sigma,\sigma_0),(\sigma,\sigma_1))$ is called the {\em coherent pair} of mixed NE for player $j$ with the given type.
\end{definition}
Thus, player one will have a coherent pair of mixed NE if there exists $\sigma\in {\cal M}(S_1)$ and  $\sigma_0,\sigma_1\in {\cal M}(S_2)$ such that $(\sigma,\sigma_0)$ is mixed NE for $G^{\lambda,0}$ and  $(\sigma,\sigma_1)$ is mixed NE for $G^{\lambda,1}$. 

\begin{proposition}
If the DG with continuous types $\lambda$ and $\gamma$ has a coherent pair of mixed NE $(\sigma,\sigma_0)$ and  $(\sigma,\sigma_1)$ for the first player with a given type $\lambda$, then for each $\gamma$ in the unit interval there exists a mixed strategy profile of the form $(\sigma,\sigma_\gamma)$ that is a NE for $G^{(\lambda,\gamma)}$.
\end{proposition}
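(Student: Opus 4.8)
The plan is to hold player one's strategy $\sigma$ and type $\lambda$ fixed throughout and to vary only player two's reply, so that verifying $(\sigma,\sigma_\gamma)$ is a local NE splits into two separate best-response conditions: (i) $\sigma_\gamma$ is a best response for player two of type $\gamma$ against $\sigma$, and (ii) $\sigma$ remains a best response for player one of type $\lambda$ against $\sigma_\gamma$. The whole argument rests on the fact that in a uniform DG each player's total payoff, namely $(1-\lambda)\pi_{11}+\lambda\pi_{21}$ for player one and $(1-\gamma)\pi_{12}+\gamma\pi_{22}$ for player two, is affine in the opponent's mixed strategy and affine in the player's own weight.

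First I would dispose of condition (ii) by a convexity argument. Let $K=\{\tau\in{\cal M}(S_2): \sigma \text{ is a best response for player one of type }\lambda\text{ against }\tau\}$. Writing player one's total payoff to a deviation $s'$ as an affine function $h_{s'}(\tau)$ of $\tau$, the statement that $\sigma$ is a best response is the conjunction of the finitely many affine inequalities $\sum_s\sigma(s)h_s(\tau)\ge h_{s'}(\tau)$; hence $K$ is a closed convex subset of ${\cal M}(S_2)$. By the coherent-pair hypothesis $\sigma_0,\sigma_1\in K$, so condition (ii) holds automatically for any $\sigma_\gamma$ we manage to choose inside $K$, and in particular for the whole segment $\mathrm{conv}\{\sigma_0,\sigma_1\}$.

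Condition (i) is the crux. For each pure reply $t$ of player two, its total payoff against $\sigma$ is the affine function $f_t(\gamma)=(1-\gamma)\pi_{12}(\sigma,t)+\gamma\pi_{22}(\sigma,t)$ of $\gamma$, so the best replies are the maximisers of a finite family of lines and the best-response correspondence $\gamma\mapsto B(\gamma)$ has the usual monotone threshold structure. I would then choose $\sigma_\gamma$ to be a genuine best reply that lies in $K$, case by case: where player two has a unique pure best reply I must show that pure reply already belongs to $K$, and where player two is indifferent I may take any point of the nonempty set $K$. The placement in $K$ comes from the affine dependence on $\gamma$: an affine function that is positive at an interior point is positive at one of the two endpoints, so if a strict inequality $f_t(\gamma)>f_{t'}(\gamma)$ holds at some interior $\gamma$ then it already holds at $\gamma=0$ or $\gamma=1$, and at that extreme type the coherent-pair hypothesis identifies the corresponding best reply with $\sigma_0$ or $\sigma_1$, which lies in $K$.

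The main obstacle is exactly this last placement of player two's intermediate best reply inside $K$, and it is where the two-action setting of the paper (and of the PD application) does the real work: there player two's choice is governed by a single affine sign condition $\Delta(\gamma)=f_C(\gamma)-f_D(\gamma)$, and the three cases $\Delta(\gamma)>0$, $\Delta(\gamma)<0$, $\Delta(\gamma)=0$ force, respectively, a pure reply already certified to lie in $K$ at an extreme type, the opposite such pure reply, or an arbitrary element of $K$. I would stress that the naive candidate $\sigma_\gamma=(1-\gamma)\sigma_0+\gamma\sigma_1$ cannot simply be used: although it satisfies (ii) by convexity of $K$, it generally violates (i), since below a switching threshold player two's unique best reply is a pure strategy rather than this interpolation. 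Hence $\sigma_\gamma$ must be read off from the best-response structure of $G^{(\lambda,\gamma)}$ and only then checked to sit in $K$, rather than prescribed a priori as a convex combination.
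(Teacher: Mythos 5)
Your argument is correct for the two-action case, which is exactly the case the paper actually treats, but it reaches the conclusion by a genuinely different route. The paper parametrises $\sigma=pC_1+(1-p)D_1$ and $\sigma_\gamma=p_\gamma C_2+(1-p_\gamma)D_2$ and then asserts, via an unshown ``long calculation,'' an explicit closed-form for $p_\gamma$ as a ratio that interpolates $p_0$ and $p_1$ with weights $(1-\gamma)A$ and $\gamma B$, where $A$ and $B$ are player two's payoff advantages of $C_2$ over $D_2$ against $\sigma$ in the two extreme games $G^{(\lambda,0)}$ and $G^{(\lambda,1)}$. You instead decompose the equilibrium condition into the two best-response conditions, dispose of player one's side once and for all by observing that player one's payoff $(1-\lambda)\pi_{11}+\lambda\pi_{21}$ is independent of $\gamma$, so the set $K$ is a fixed convex set containing $\sigma_0$ and $\sigma_1$, and then settle player two's side by the sign analysis of the single affine function $\Delta(\gamma)$, which correctly forces the intermediate best reply to coincide with $\sigma_0$ or $\sigma_1$ (hence to lie in $K$) whenever it is unique. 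What each approach buys: the paper's formula is constructive and, in principle, covers genuinely mixed $\sigma_\gamma$ in one stroke, but the derivation is suppressed and the formula's well-definedness (nonvanishing denominator, $p_\gamma\in[0,1]$ when $A$ and $B$ have opposite signs) is never checked; your argument is fully rigorous, makes explicit why the naive candidate $(1-\gamma)\sigma_0+\gamma\sigma_1$ fails, and is honest that the endpoint-placement step leans on there being only two actions --- the same restriction the paper makes before waving at ``the general case.'' The one caveat is that your sketch, like the paper's, does not actually deliver the general (more than two actions) case: with several pure replies the competitor-by-competitor endpoint argument can send different competitors to different endpoints, so the unique interior best reply need not appear at either extreme type; but since you flag this as the obstacle rather than claim it, the proposal stands as a valid and arguably cleaner proof of what the paper proves.
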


\begin{proof}
For simplicity we present the proof for the case that each player has a strategy set with two actions: $S_1=\{C_1,D_1\}$ and $S_2=\{C_2,D_2\}$, with the payoff matrices $G^{(\lambda,0)}$ and $G^{(\lambda,1)}$ given in TABLE~\ref{tab:3} and TABLE~\ref{tab:4}.

\begin{table}[ht]\renewcommand{\arraystretch}{2}
\centering
\begin{tabular}{cccc}
\multicolumn{1}{c}{} & \multicolumn{1}{c}{}& \multicolumn{2}{c}{Player 2} \\ 
\cline{3-4}
\multicolumn{1}{c}{} & &\multicolumn{1}{|c|}{C}  &\multicolumn{1}{|c|}{D}  \\ 
\cline{2-4}
\multirow{2}{*}{Player 1} & \multicolumn{1}{|c|}{C} & \multicolumn{1}{|c|}{($a_1, a_2$)} & \multicolumn{1}{|c|}{($b_1, b_2$)} \\ 
\cline{2-4}
& \multicolumn{1}{|c|}{D} & \multicolumn{1}{|c|}{($c_1, c_2$)} & \multicolumn{1}{|c|}{($d_1, d_2$)}\\
\cline{2-4}
\end{tabular}\footnotesize\caption{PD - Payoff Matrix Representation}\label{tab:3}
\end{table}

\begin{table}[ht]\renewcommand{\arraystretch}{2}
\centering
\begin{tabular}{cccc}
\multicolumn{1}{c}{} & \multicolumn{1}{c}{}& \multicolumn{2}{c}{Player 2} \\ 
\cline{3-4}
\multicolumn{1}{c}{} & &\multicolumn{1}{|c|}{C}  &\multicolumn{1}{|c|}{D}  \\ 
\cline{2-4}
\multirow{2}{*}{Player 1} & \multicolumn{1}{|c|}{C} & \multicolumn{1}{|c|}{($e_1, e_2$)} & \multicolumn{1}{c|}{($f_1, f_2$)} \\ 
\cline{2-4}
& \multicolumn{1}{|c|}{D} & \multicolumn{1}{|c|}{($g_1, g_2$)} & \multicolumn{1}{c|}{($h_1, h_2$)}\\
\cline{2-4}
\end{tabular}\footnotesize\caption{PD - Payoff Matrix Representation}\label{tab:4}
\end{table}

Suppose $\sigma=pC_1+(1-p)D_1$ and $\sigma_\gamma=p_\gamma C_2,+(1-p_\gamma)D_2$, where for $\gamma=0$ and $\gamma=1$ we obtain the pair of coherent mixed NE  $(\sigma,\sigma_0)$ and  $(\sigma,\sigma_1)$. Then a long calculation shows that $(\sigma,\sigma_\gamma)$ is a mixed NE for $G^{(\lambda,\gamma)}$ if 

\[{\textstyle p_\gamma=}\]\[{\textstyle \frac{(1-\gamma)p_0[p(a-2-b_2)+(1-p)(c_2-d_2)]+\gamma p_1[p(e_2-f_2)+(1-p)(g_2-h_2)]}{(1-\gamma)[p(a-2-b_2)+(1-p)(c_2-d_2)]+\gamma [p(e_2-f_2)+(1-p)(g_2-h_2)]}.}\]
\normalsize

 The general case can be proved in a similar way.

\end{proof}
\subsection{Pure regular DG}
Next, we examine how information about the set of local pure NE for the DG for various types of the two players can be used to deduce the Bayesian NE for the DG. We say a DG is pure regular if it has a set of four pairs of pure NE for all extreme types, for which the strategy of each player only depends on its own type. Here is the exact definition.
\begin{definition}
We say a DG for two players is {\em pure regular} if there are four  pure strategy profiles $(s,u), (s,v), (t,u), (t,v)$ such that the two pairs $((s,u),(s,v))$ and $((t,u), (t,v))$ are respectively coherent pairs of pure NE for the first player with extreme types $\lambda=0$ and $\lambda=1$ respectively, while the two pairs $((s,u),(t,u))$ and $((s,v),(t,v))$ are  respectively coherent pairs of pure NE for the second player with extreme types $\gamma=0$ and $\gamma=1$ respectively. We say that the four strategy profiles {\em induce} pure regularity. 

\end{definition}
For a DG with a finite set of types for each player, we can go further as follows.

\begin{definition}
We say a DG with finite sets of types given by $\lambda_m$ ($1\leq m\leq k$) and $\gamma_n$ ($1\leq n\leq \ell$) is {\em completely pure regular} if  there are pure strategies $s_m\in S_1$ ($1\leq m\leq k$) and $u_n\in S_2$ ($1\leq n\leq \ell$) such that the strategy profile $(s_m,u_n)$ is a pure Nash equilibrium for the game $G^{mn}$ for $1\leq n\leq \ell$ and $1\leq m\leq k$. 

\end{definition}
It is clear that a completely pure regular DG is pure regular and thus our terminology is consistent. Note also that for a completely pure regular DG as above $(\prod_{1\leq m\leq k}s_m,\prod_{1\leq n\leq \ell}u_n)$ is a pure Bayesian strategy in which the first player takes action $s_m$ for type $\lambda_m$ and the second player takes action $u_n$ for type $\gamma_n$. The above notions can also be extended to {\em mixed regular} and {\em completely mixed regular} DG.  
\begin{lemma}
 The DG is completely pure regular if and only if for all conditional probability distributions for the types of the two players the Bayesian pure strategy $(\prod_{1\leq m\leq k}s_m,\prod_{1\leq n\leq \ell}u_n)$ is a pure Bayesian NE.

\end{lemma}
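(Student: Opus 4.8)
The plan is to prove the two implications separately; the engine of both is one structural feature of a uniform DG. For an action profile $(a,b)\in S_1\times S_2$, player~1's payoff in the local game $G^{mn}$ is $(1-\lambda_m)\pi_{11}(a,b)+\lambda_m\pi_{21}(a,b)$, which depends on $m$ but not on $n$, and symmetrically player~2's payoff $(1-\gamma_n)\pi_{12}(a,b)+\gamma_n\pi_{22}(a,b)$ depends only on $n$; write these as $\pi_1^m(a,b)$ and $\pi_2^n(a,b)$. Consequently the interim expected payoff of player~1 of type $\lambda_m$ who plays $a$ against the opponent strategy $\prod_n u_n$ is exactly $\sum_n P(\gamma_n\mid\lambda_m)\,\pi_1^m(a,u_n)$, and likewise for player~2, so the per-game equilibrium inequalities and the interim best-response inequalities live in the same space and can be compared directly.

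For the forward implication, suppose $(s_m,u_n)$ is a pure NE of $G^{mn}$ for every $m,n$, and fix any conditional distributions. For a fixed type $\lambda_m$ and any deviation $a\in S_1$, the equilibrium property of each $G^{mn}$ gives $\pi_1^m(s_m,u_n)\ge\pi_1^m(a,u_n)$ for every $n$; since the weights $P(\gamma_n\mid\lambda_m)$ are nonnegative and sum to one, the convex combination over $n$ preserves the inequality, so $s_m$ is an interim best response. The same averaging over $m$ against the weights $P(\lambda_m\mid\gamma_n)$ shows $u_n$ is an interim best response for player~2. Hence $(\prod_m s_m,\prod_n u_n)$ is a pure Bayesian NE, and since the distributions were arbitrary this holds for all of them.

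For the converse, assume the Bayesian strategy is a pure Bayesian NE for every choice of conditional distributions, and fix an arbitrary pair $(m_0,n_0)$; I must recover the two NE inequalities for $G^{m_0n_0}$. The idea is to choose a distribution that forces degenerate beliefs. Put positive probability on the pair $(\lambda_{m_0},\gamma_{n_0})$, assign probability $0$ to the rest of row $m_0$ and of column $n_0$, and spread the remaining mass over the pairs $(\lambda_m,\gamma_n)$ with $m\neq m_0$ and $n\neq n_0$ so that every marginal is strictly positive; this block is nonempty precisely because each player carries the two extreme types, so $k,\ell\ge2$. For this distribution $P(\gamma_n\mid\lambda_{m_0})=\delta_{n,n_0}$ and $P(\lambda_m\mid\gamma_{n_0})=\delta_{m,m_0}$ simultaneously, so the interim best-response conditions at types $\lambda_{m_0}$ and $\gamma_{n_0}$ collapse to $\pi_1^{m_0}(s_{m_0},u_{n_0})\ge\pi_1^{m_0}(a,u_{n_0})$ and $\pi_2^{n_0}(s_{m_0},u_{n_0})\ge\pi_2^{n_0}(s_{m_0},b)$ for all $a,b$. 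These are exactly the two NE inequalities for $G^{m_0n_0}$, and as $(m_0,n_0)$ was arbitrary the DG is completely pure regular.

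The only delicate point, and where I expect the write-up to need care, is the treatment of conditional beliefs and support. A naive point mass on $(\lambda_{m_0},\gamma_{n_0})$ would leave the beliefs of every other type as $0/0$, so the right construction keeps all marginals strictly positive while still making the two relevant beliefs degenerate; the extra interim conditions that this distribution imposes at the other types are simply true statements that we do not use. If instead one adopts a convention demanding full support on the joint type space, the clean alternative is to take a sequence of full-support distributions whose beliefs at $\lambda_{m_0}$ and $\gamma_{n_0}$ tend to $\delta_{n_0}$ and $\delta_{m_0}$ and to pass to the limit, using continuity of the interim payoff in the belief weights.
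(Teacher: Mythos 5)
Your proof is correct. The paper states this lemma without any proof (it is an extended abstract), so there is nothing to compare against; your argument is the natural one, resting on the key structural observation that in a uniform DG each player's local payoff matrix depends only on that player's own type, so the forward direction is a convex combination of the per-game NE inequalities and the converse follows by engineering degenerate conditional beliefs. Your care with the support issue (keeping all marginals positive rather than using a bare point mass on $(\lambda_{m_0},\gamma_{n_0})$, which would leave other types' beliefs undefined) is exactly the right precaution and relies correctly on $k,\ell\geq 2$, which the paper guarantees via the extreme types.
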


We can determine if a DG with finite types for the two players is completely pure regular as follows: 
\begin{itemize}
\item [(i)] Test if the DG is pure regular for a set of four pure NE with extreme types.
\item [(ii)] For each set of four pure NE, which induces pure regularity, use Proposition~\ref{pure-coherent} to determine all the pure NE for all $k\times \ell$ combinations of pairs of types for the two players; these will be the pure NE of $G^{mn}$ for $1\leq m\leq k$ and $1\leq n\leq \ell$.
\item [(iii)] Finally check whether the set of  $k\times \ell$ pure NE induces a completely pure regular DG. 
\end{itemize}
These three tasks (i)  - (iii) can be done in linear time (i.e., linear in the maximum number of types $\max(k,\ell)$ for the two players).
We therefore have the following theorem.
\begin{theorem}
Given any DG with finite number of types for the two players, we can decide in linear time if it is completely pure regular, in which case a Bayesian pure NE is obtained in linear time. 

\end{theorem}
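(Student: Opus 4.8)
The plan is to verify that the three-step procedure (i)--(iii) stated just before the theorem is both correct and executable in time $O(\max(k,\ell))$, and then to read off the equilibrium. The ``in which case'' clause is the easy half: once the procedure certifies that the DG is completely pure regular and returns the witnessing pure strategies $s_m\in S_1$ ($1\leq m\leq k$) and $u_n\in S_2$ ($1\leq n\leq\ell$), the preceding Lemma guarantees that $(\prod_{1\leq m\leq k}s_m,\prod_{1\leq n\leq\ell}u_n)$ is a pure Bayesian NE for \emph{every} conditional distribution on the types, and writing this profile down costs only $O(k+\ell)$. Hence the whole weight of the argument falls on showing that the \emph{decision} can be made in linear time.

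For the decision I would first run step (i): compute the pure NE of the four extreme games $G^{11},G^{1\ell},G^{k1},G^{k\ell}$ and enumerate every quadruple $(s,u),(s,v),(t,u),(t,v)$ among them that induces pure regularity. Since the action sets $S_1,S_2$ are fixed, each extreme game has a number of pure NE bounded independently of $k$ and $\ell$, so this enumeration is $O(1)$ in the number of types; if no quadruple exists the DG is not pure regular, hence not completely pure regular, and the decision is negative. For each surviving quadruple I would carry out step (ii) by invoking Proposition~\ref{pure-coherent}: applying it to the second player's coherent pairs at its extreme types $\gamma=0$ and $\gamma=1$ produces a switch index for player~1's action, and applying it to the first player's coherent pairs at its extreme types $\lambda=0$ and $\lambda=1$ produces a switch index for player~2's action. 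These thresholds define the only candidate type-dependent strategies, namely player~1 playing $s$ below its threshold and $t$ above, and player~2 playing $u$ below its threshold and $v$ above. Each threshold is located by a linear scan over its own index at cost $O(k+\ell)$, and a quadruple is discarded unless the two boundary scans for each player agree, since complete pure regularity forces each player's action to depend on its own type alone.

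The crux, and the step I expect to be the main obstacle, is showing that step (iii), the global verification that $(s_m,u_n)$ is a NE of $G^{mn}$ for \emph{all} $k\times\ell$ pairs, can be done without inspecting all $k\ell$ cells. The observation I would exploit is that in a uniform DG the payoff of player~1 depends only on its own weight $\lambda_m$ and on the realised action of the opponent, and not on $\gamma_n$ directly, and symmetrically for player~2. Thus the Nash condition at $(m,n)$ decouples: player~1's best-response requirement depends only on the pair $(\lambda_m,u_n)$ and player~2's only on $(s_m,\gamma_n)$. As $u_n$ takes only the two values $u,v$ and $s_m$ only the two values $s,t$, it suffices to check, for each of the $k$ types of player~1, that $s_m$ is a best response to both $u$ and $v$ at weight $\lambda_m$, and, for each of the $\ell$ types of player~2, that $u_n$ is a best response to both $s$ and $t$ at weight $\gamma_n$; these $O(k+\ell)$ constant-cost checks are equivalent to the full grid condition. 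Completeness of the search follows by noting that if any strategies witness complete pure regularity, their restriction to the extreme types is one of the quadruples enumerated in step (i), so some candidate must pass. Summing the costs of (i)--(iii) yields the linear bound, and the Lemma then delivers the Bayesian pure NE.
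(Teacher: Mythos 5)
Your proposal follows the paper's own route: the paper's entire proof consists of the three-step procedure (i)--(iii) stated immediately before the theorem together with the bare assertion that each step runs in linear time, and your argument is a faithful elaboration of exactly that procedure. The one place you go beyond the paper is the verification step: taken literally, the paper's step (ii) asks to ``determine all the pure NE for all $k\times\ell$ combinations'', which cannot be linear in $\max(k,\ell)$, and your decoupling observation --- that in a uniform DG player $1$'s best-response condition at $G^{mn}$ depends only on $(\lambda_m,u_n)$ and player $2$'s only on $(s_m,\gamma_n)$, so the full grid reduces to $O(k+\ell)$ constant-cost checks against the two actions $u,v$ and $s,t$ --- is precisely the ingredient needed to make the linear-time claim literally true. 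This is a repair of, not a departure from, the paper's argument; the only residual caveat, which you inherit from the paper rather than introduce, is that the search only considers threshold-form witnesses derived via Proposition~\ref{pure-coherent} from the extreme-type quadruples, so completeness of the decision procedure rests on the implicit assumption that a completely pure regular DG always admits a witness of that form.
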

The above theorem can be extended to any $N$-player MG with $M$ basic games and finite number of types for each player. A generalisation to completely mixed regular DG is also possible. 
In the next section, we present a pure regular DG with a finite number of types which will give us both an example of a completely pure regular and an example where this property fails.

\section{A DG for Prisoner's Dilemma}

\noindent The PD is a fundamental non-zero-sum problem of game theory that attempts to mathematically analyse the behaviour of individuals in a strategic situation, in which the success of each individual does not depend entirely on one's choice, but on the opponent's as well~\cite{The84}. Essentially, it is an abstract formulation of some common situations in which what is best for each person individually leads to mutual defection, whereas everyone would have been better off with mutual cooperation~\cite{The84}. It has provided a tool for experimental studies in various disciplines, such as economics, social psychology, evolutionary biology and fields that are involved with modelling of social processes, such as behaviour in decision making~\cite{The84}\cite{AHT1981}.

Each of the players competing in the PD has the choice to cooperate (C) or defect (D) and the payoff values gained by the combination of the aforementioned actions are $T, R, P$ and $S$.

The payoff matrix of the PD is presented in TABLE~\ref{tab:5}.

\begin{table}[ht]\renewcommand{\arraystretch}{2}
\centering
\begin{tabular}{cccc}
\multicolumn{1}{c}{} & \multicolumn{1}{c}{}& \multicolumn{2}{c}{Player 2} \\ 
\cline{3-4}
\multicolumn{1}{c}{} & &\multicolumn{1}{|c|}{C}  &\multicolumn{1}{c|}{D}  \\ 
\cline{2-4}
\multirow{2}{*}{Player 1} & \multicolumn{1}{|c|}{C} & \multicolumn{1}{|c|}{($R, R$)} & \multicolumn{1}{c|}{($S, T$)} \\ 
\cline{2-4}
& \multicolumn{1}{|c|}{D} & \multicolumn{1}{|c|}{($T, S$)} & \multicolumn{1}{c|}{($P, P$)}\\
\cline{2-4}
\end{tabular}\footnotesize\caption{PD - Payoff Matrix Representation}\label{tab:5}
\end{table}

The values of $T, R, P$ and $S$ satisfy the following two inequalities:
\begin{center}
$T > R > P > S \;\;$ and $\;\; R > (T + S)/2$
\end{center}

The first equation specifies the order of the payoffs and defines the dilemma, since the best a player can do is get $T$ (i.e. the temptation to defect payoff when the other player cooperates), the worst a player can do is get $S$ (i.e. the sucker's payoff for cooperating while the other player defects), and, in ordering the other two outcomes, $R$ (i.e. the reward payoff for mutual cooperation), is assumed to be better than $P$ (i.e. the punishment payoff for mutual defection)~\cite{The84}. The second equation ensures that, in the repeated game, the players cannot get out of the dilemma by taking turns in exploiting each other. This means that an even chance of exploiting and being exploited is not as good an outcome for a player as mutual cooperation. Therefore, it is assumed that $R$ is greater than the average of $T$ and $S$~\cite{The84}. Finally, a special case of the PD occurs when the apparent advantage of defecting over cooperating is not dependent on the opponent's choice and the disadvantage of the opponent defecting over cooperating is not dependent on one's choice, as can be illustrated in the following equation~\cite{The84}:
\begin{center}
$T + S = P + R \;\; \mbox{or} \;\; R + P - T - S = 0$
\end{center}

\subsection{The Social Game}

\noindent The SG encourages cooperation and discourages defection, as cooperating is usually considered to be the ethical and moral choice to make when interacting with others in social dilemmas. This can be done in different ways corresponding to different types of payoff matrices. Here, we will restrict to the case that the SG encourages cooperation and discourages defection for each player, independently of the action chosen by the other player. 

We present the normal form and the mathematical formulation of the SG as follows. Assume that the competing participants in the SG are player 1 and player 2. Each of them has the choice to select between C and D. When they have both made their choice, the payoffs assigned to them are calculated according to payoff matrix~\ref{tab:6}, where $M_1$, $M_2$ and $M^{'}_{1}$, $M^{'}_{2}$ satisfy:

\[M_1>M'_1,\quad M_2 > M'_{2}\]

When $M_1=M_2$ and $M'_1=M'_2$, we will have a symmetric social game and our framework reduces to the altruistic extension in~\cite{Chen2011}.


\begin{table}[ht]\renewcommand{\arraystretch}{2}
\centering
\begin{tabular}{cccc}
\multicolumn{1}{c}{} & \multicolumn{1}{c}{}& \multicolumn{2}{c}{Player 2} \\ 
\cline{3-4}
\multicolumn{1}{c}{} & &\multicolumn{1}{|c|}{C}  &\multicolumn{1}{|c|}{D}  \\ 
\cline{2-4}
\multirow{2}{*}{Player 1} & \multicolumn{1}{|c}{C} & \multicolumn{1}{|c|}{($M_1, M_2$)} & \multicolumn{1}{|c|}{($M_1, M^{'}_{2}$)} \\ 
\cline{2-4}
& \multicolumn{1}{|c}{D} & \multicolumn{1}{|c|}{($M^{'}_{1}, M_2$)} & \multicolumn{1}{|c|}{($M^{'}_{1}, M^{'}_{2}$)} \\
\cline{2-4}
\end{tabular}\footnotesize\caption{SG - Payoff Matrix Representation}\label{tab:6}
\end{table}

Thus, in the SG we treat in this paper, the players are individually and independently rewarded for cooperating and punished for defecting. This can be interpreted in the following way. Cooperation by an individual, independent of the action of the opponent, is socially rewarded by inducing a good conscience, whereas defection is punished by creating a guilty one. The values of $M_1$, $M_2$  and $M^{'}_{1}$, $M^{'}_{2}$ are assumed to be socially determined to correspond to the average moral norm in the given society and are considered to have evolved in the course of increasing complexity, communication and moral growth in human history.

\subsection{The Double Game}

\noindent Although the payoffs for the SG are determined by the social context of the game, there is still individual variation in pro-social behaviour of the players.  We assume each player has a social coefficient taking values between 0 and 1, which reflects how pro-social they are in practice in each round of the game. In our particular SG, the social coefficient of a player signifies how much the player cares about the morality or the social aspect of their action. The payoffs of the DG for each player are then the weighted sum or convex combination of the payoffs of the PD and SG using the player's social coefficient as represented in TABLE~\ref{tab:7}, where  $\lambda$ and  $\gamma$ (with  $0\leq \lambda, \gamma\leq 1)$ are the social coefficients of players 1 and 2, respectively. Note that  the two players can still play the standard version of the PD by selecting their social coefficients to be equal to 0, in which case the DG reduces to the PD.

\begin{table}[ht]\renewcommand{\arraystretch}{1.7}
\centering
\begin{tabular}{cccc}
\multicolumn{1}{c}{} & \multicolumn{1}{c}{}& \multicolumn{2}{c}{} \\
\cline{3-4}
\multicolumn{1}{c}{} & & \multicolumn{1}{|c|}{C} & \multicolumn{1}{|c|}{D} \\ 
\cline{2-4}
\multirow{2}{*}{} & \multicolumn{1}{|c}{C} &\multicolumn{1}{|@{}c@{}|}{(1 - $\lambda$)$R$ + $\lambda M_1$, (1 - $\gamma$)$R$ + $\gamma M_2$}  & \multicolumn{1}{|@{}c@{}|}{(1 - $\lambda$)$S$ + $\lambda M_1$, (1 - $\gamma$)$T$ + $\gamma S$} \\ 
\cline{2-4}
& \multicolumn{1}{|c}{D} & \multicolumn{1}{|@{}c@{}|}{(1 - $\lambda$)$T$ + $\lambda S$, (1 - $\gamma$)$S$ + $\gamma M_2$} & \multicolumn{1}{|@{}c@{}|}{(1 - $\lambda$)$P$ + $\lambda S$, (1 - $\gamma$)$P$ + $\gamma S$} \\
\cline{2-4}
\end{tabular}\scriptsize\caption{DG - Payoff Matrix Representation}\label{tab:7}
\end{table}

In addition to the inequalities satisfied in the payoffs for PD and SG, we stipulate the two new inequalities below that connect the payoff values from both the PD and the SG:

\begin{center}
$M_1,M_2 > (R + P)/2 \;\;$ and $\;\; T > R > M_ 1\geq  M_2 > P > S$ or $\;\; T > R > M_2 \geq M_1 > P > S$
\end{center}

First, we argue that $M_1$ and $M_2$ should be less than $T$, but greater than $P$. The former should hold, otherwise if $M_1$ and $M_2$ are equal to or greater than $T$, then, there is no dilemma as to what the best strategy is (one should select the highest possible social coefficient and always choose C in order to achieve the highest available payoff), and, the SG loses its meaning. On the other hand, the latter should hold, because, if $M_1$ and $M_2$ are equal to or less than $P$, then, cooperation is discouraged, since one would have no incentive to select a high social coefficient and choose C. In addition, $M_1$ and $M_2$ should be strictly less than $R$, as we would like to encourage cooperation in the SG by assigning to it a payoff value that is somewhat less than the payoff value obtained through mutual cooperation in the PD. This, we believe, reflects more accurately real-life situations, as, in general, the decisions based on moral incentives do not bring high material benefits. Finally, we assume that $M_1$ and $M_2$ should be greater than the average of $R$ and $P$, so that the dilemma of whether to cooperate or defect becomes more intense.

Then, we argue that $M^{'}_{1}$ and $M^{'}_{2}$ should be equal to $S$, so as to discourage defection with a high social coefficient, which would be self-contradictory, and, also, to punish, in a sense, defection, since $M^{'}_{1}$, $M^{'}_{2}$ are the payoff value for defection in the SG, which, by its definition, should not give a high value to defection.

The selection of the social coefficient reveals, in part, the strategy one will follow in a given game. To illustrate this with an example, note that the choice of social coefficient equal to 1 implies cooperation, since defection would give a payoff of 0, and, similarly, the choice of social coefficient equal to 0 most probably implies defection, since cooperation in that case would give a payoff of 0, unless it is mutual, in which case it would be beneficial. On the other hand, selecting a social coefficient between 0 and 1 leaves room for more complex and sophisticated strategies. Finally, as we will see later on, in the implementation of the DG, certain restrictions are imposed on how much a player can increase or decrease the social coefficient in a single round. This is done,  since, in general, humans do not change their moral values radically in a short amount of time.

\subsection{Double game with complete information}
\noindent If we assume that the players know each other's social coefficients prior to every game, the DG becomes a game with complete information and all payoffs are known to both players. In this section we focus on the analysis of the Nash equilibrium in this type of the DG. It is well known that the Nash equilibrium for the PD is mutual defection, represented by (D,D). However, from the perspective of the SG, the best response of any player is to cooperate, as this always leads to a better score as compared to defecting. As a result, if the players make their decisions with no concern for their opponents' behaviour, it leads to a Nash equilibrium of mutual cooperation, represented by (C,C). However, this simplicity cannot be incorporated in the DG, due to the inclusion of the social coefficient, which alters the reward for all outcomes.

In accordance with the equilibrium, we must find out how the social coefficients of the two players alter the potential payoffs for the four possible outcomes of the game. The payoffs for each possible outcome change along with the variation in the social coefficients $\lambda$ and $\gamma$ of players 1 and 2, respectively, as shown in TABLE~\ref{tab:8}. 

At this stage, we consider the payoff equations for player 1. We note that, by symmetry, a similar analysis can be conducted for player 2. Furthermore, without loss of generality, we assume the equality $M^{'}_{1}= M^{'}_{2} = S$, since the SG punishes defection.

For $\lambda$ let us label the three crossing points of the payoff equations as \! $\lambda=a_1$ \! for \! ${\pi}_{1}$(D,D) = ${\pi}_{1}$(C,D), \! $\lambda=b_1$ \! for \! ${\pi}_{1}$(D,C) = ${\pi}_{1}$(C,C) \! and \! $\lambda=c_1$ \! for \! ${\pi}_{1}$(D,C) = ${\pi}_{1}$(C,D). By equating the equations for each payoff, we find the values of the crossing points (similarly for $\gamma$) to be:

\begin{IEEEeqnarray}{rCl}
a_1&=&\frac{P-S}{M_1+P-2S},\ a_2=\frac{P-S}{M_2+P-2S}\\
b_1&=&\frac{T-R}{T-S+M_1-R},\ b_2=\frac{T-R}{T-S+M_2-R}\\
c_1&=&\frac{T-S}{M_1+T-2S},\ c_2=\frac{T-S}{M_2+T-2S}
\end{IEEEeqnarray}
\normalsize

We have the three following cases:

\begin{itemize}
\item  $a_1<b_1<c_1$ ,\ $a_2<b_2<c_2$ \ if \ $P-S<T-R$
\item  $b_1<a_1<c_1$ ,\ $b_2<a_2<c_2$ \ if \ $P-S>T-R$
\item  $a_1=b_1<c_1$ ,\ $a_2=b_2<c_2$ \ if \ $P-S=T-R$
\end{itemize}
\normalsize

We can obtain the Nash equilibrium points for $M_1,M_2 > R$ and $M_1,M_2 < R$ and $M_1= M_2 = R$; note that for all values of $\lambda$ and $\gamma$ that are greater than $b_1$ or $b_2$, the equilibria are equal. To illustrate the method, we will compute below the Nash equilibrium for the two generic cases of $a_1<b_1,\ a_2<b_2$ and $b_1<a_1,\ b_2<a_2$ when $T>R>M_1>M_2>P>M_1^\prime= M_2^\prime=S$.

\subsubsection{Case: $a_1<b_1$ and $a_2<b_2$} 

FIGURE~\ref{fig:1} shows the variation in the payoffs resulting from each outcome of the DG with different values of $\lambda$. We can describe the order of preference for all values of $\lambda$ lying between 0 and 1 by using the values of $a_1, b_1$ and $c_1$ and the functions shown in FIGURE~\ref{fig:1}, and, then, we can obtain the equilibria for different social coefficients. For instance, the preference ordering of player 1 for $0 \leq \lambda < a_1 $ is:
\begin{equation}
  (D,C) > (C,C) > (D,D) > (C,D)
\end{equation}

Since the DG is a symmetric game, similar inequalities also hold for player 2, with the social coefficient being $\gamma$ instead of $\lambda$, and each outcome replaced with its mirror point. For instance, while (D,C) is shown to be the most preferable outcome for player 1 in FIGURE~\ref{fig:1}, (C, D) would take its place for player 2. The pair $(\lambda, \gamma)$ is a point of unit square $[0,1]\times [0,1]$. 

The equilibria for different social coefficients in the case of $a_1 < b_1$ are given in FIGURE~\ref{fig:2} for the generic sub-rectangles and are also presented in TABLE~\ref{tab:8}, which includes the boundary points of these 9 regions. Note that on any boundary point of 2 or 4 generic regions, the set of equilibria is precisely the union of equilibria in the neighbouring generic regions. 

\begin{figure}[h]
\centering
\begin{tikzpicture}
\tikzstyle{every node}=[font=\scriptsize]
  \draw[black,  thick] (0,0) -- (0,5);
    \draw[black,  thick] (0,0) -- (6,0);
      \draw[black,  thick] (6,0) -- (6,5);
        \draw[black,  thick] (0,5) -- (6,5);
       
          \draw[black] (6.6,2.7) -- (8.2,2.7);
    \draw[black] (6.6,2.7) -- (6.6,4.8);
      \draw[black] (6.6,4.8) -- (8.2,4.8);
        \draw[black] (8.2,2.7) -- (8.2,4.8);

 \draw[red,thick, loosely dashed] (6.7,3) -- (7.3,3);
 \draw[black,thick, densely dotted] (6.7,4) -- (7.3,4);
\draw[blue,thick,densely dashed] (6.7,3.5) -- (7.3,3.5);
\draw[yellow,thick] (6.7,4.5) -- (7.3,4.5);

 \draw(7.3,4.5) node[right] {$D,C$};
 \draw(7.3,4) node[right] {$C,C$};
 \draw(7.3,3.5) node[right] {$D,D$};
 \draw(7.3,3) node[right] {$C,D$};

 \draw(0,.5) node[left] {$P$};
 \draw(0,0) node[left] {$S$};
 \draw(0,3.5) node[left] {$M_1$};
 \draw(0,4) node[left] {$R$};
 \draw(0,5) node[left] {$T$};

  \draw(6,.5) node[right] {$P$};
 \draw(6,0) node[right] {$S$};
 \draw(6,3.5) node[right] {$M_1$};
 \draw(6,4) node[right] {$R$};
 \draw(6,5) node[right] {$T$};
 
  \draw(0,0) node[below] {$0$};
   \draw(.75,0) node[below] {$a_1$};
    \draw(1.35,0) node[below] {$b_1$};
     \draw(3.55,0) node[below] {$c_1$};
     \draw(6,0) node[below] {$1$};
 
    \draw(3,5) node[above] {$a_1<b_1<c_1$};
 
 \draw[red,thick, loosely dashed] (0,0) -- (6,3.5);
 \draw[black,thick, densely dotted] (0,4) -- (6,3.5);
\draw[blue,thick,densely dashed] (0,.5) -- (6,0);
\draw[yellow,thick] (0,5) -- (6,0);
 \draw[blue,loosely dotted] (.75,0) -- (.75,.5);
 
  \draw[blue,loosely dotted] (1.35,0) -- (1.35,3.9);
   \draw[blue, loosely dotted] (3.55,0) -- (3.55,2);
\end{tikzpicture}
\caption{Change of payoffs ($T>R>M_1>M_2>P>M_1^\prime= M_2^\prime=S$)}\label{fig:1}
\end{figure}
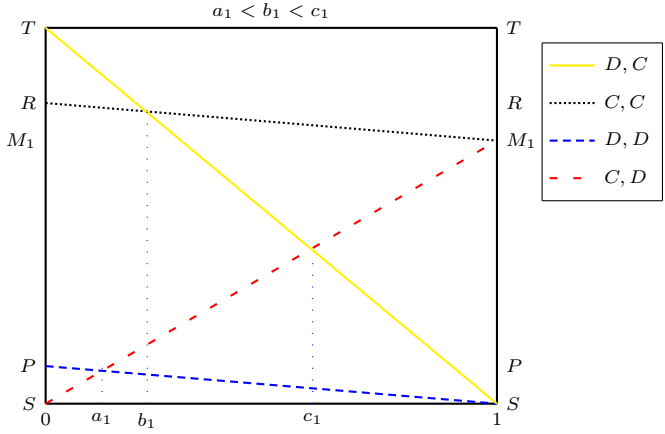


\begin{figure}[h]
\centering
\begin{tikzpicture}
  \draw[black,  thick] (0,0) -- (0,6);
    \draw[black,  thick] (2,0) -- (2,6);
      \draw[black,  thick] (4,0) -- (4,6);
        \draw[black,  thick] (6,0) -- (6,6);
          \draw[black,  thick] (0,0) -- (6,0);
    \draw[black,  thick] (0,2) -- (6,2);
      \draw[black,  thick] (0,4) -- (6,4);
        \draw[black,  thick] (0,6) -- (6,6);

 \draw(0,2) node[left] {$a_2$};
 \draw(0,0) node[left] {$0$};
 \draw(0,4) node[left] {$b_2$};
 \draw(0,6) node[left] {$1$};
 
  \draw(0,0) node[below] {$0$};
   \draw(2,0) node[below] {$a_1$};
    \draw(4,0) node[below] {$b_1$};
     \draw(6,0) node[below] {$1$};
 
\node at (1,5) {(D,C)};
\node at (1,3) {(D,C)};
\node at (1,1) {(D,D)};

\node at (3,5) {(D,C)};
\node at (3,3) {(C,D),(D,C)};
\node at (3,1) {(C,D)};

\node at (5,5) {(C,C)};
\node at (5,3) {(C,D)};
\node at (5,1) {(C,D)};
\end{tikzpicture}
\caption{The set of equilibria for each of the 9 generic regions ($a_1<b_1,\ a_2<b_2$)}\label{fig:2}
\end{figure}

\begin{table}[h]\tiny\renewcommand{\arraystretch}{2}

\centering
\centering\begin{tabular} {|@{}c@{}|@{}c@{}|@{}c@{}|@{}c@{}|@{}c@{}|@{}c@{}|} \hline 

$b_2<\gamma\leq1$ & (D,C) & (D,C) & (D,C) & (C,C),(D,C) & (C,C)   \\ \hline 
$\gamma=b_2$ & (D,C) & (C,D),(D,C)  & (C,D),(D,C)& (C,C),(C,D),(D,C)& (C,C),(C,D)\\ \hline 
$a_2<\gamma<b_2$ & (D,C) & (C,D),(D,C) & (C,D),(D,C) & (C,D),(D,C)&(C,D) \\ \hline 
$\gamma=a_2$ & (D,D),(D,C) & (D,D),(D,C),(C,D) & (C,D),(D,C)& (C,D),(D,C)& (C,D) \\ \hline 
$0\leq\gamma<a_2$ & (D,D) & (D,D),(C,D) & (C,D) & (C,D)&(C,D) \\ \hline 
~ &$0\leq\lambda<a_1$ & $\lambda=a_1$ & $a_1<\lambda<b_1$ & $\lambda=b_1$ &$b_1<\lambda\leq1$ \\ \hline 
\end{tabular}\caption{The equilibria for different social coefficients for $a_1<b_1$ and $a_2<b_2$ }\label{tab:8}
\end{table}

\subsubsection{Case: $b_1<a_1$ and $b_2<a_2$}

FIGURE~\ref{fig:3} illustrates the change in payoffs resulting from each outcome of the DG with different values of $\lambda $ in the case of $b_1<a_1$. FIGURE~\ref{fig:4} provides the set of equilibria for each of the 9 generic regions and TABLE~\ref{tab:9} presents the equilibria for all possible social coefficients.


\begin{figure}[h]
\centering
\begin{tikzpicture}
\tikzstyle{every node}=[font=\scriptsize]
  \draw[black,  thick] (0,0) -- (0,5);
    \draw[black,  thick] (0,0) -- (6,0);
      \draw[black,  thick] (6,0) -- (6,5);
        \draw[black,  thick] (0,5) -- (6,5);
       
          \draw[black] (6.6,2.7) -- (8.2,2.7);
    \draw[black] (6.6,2.7) -- (6.6,4.8);
      \draw[black] (6.6,4.8) -- (8.2,4.8);
        \draw[black] (8.2,2.7) -- (8.2,4.8);

 \draw[red,thick, loosely dashed] (6.7,3) -- (7.3,3);
 \draw[black,thick, densely dotted] (6.7,4) -- (7.3,4);
\draw[blue,thick,densely dashed] (6.7,3.5) -- (7.3,3.5);
\draw[yellow,thick] (6.7,4.5) -- (7.3,4.5);

 \draw(7.3,4.5) node[right] {$D,C$};
 \draw(7.3,4) node[right] {$C,C$};
 \draw(7.3,3.5) node[right] {$D,D$};
 \draw(7.3,3) node[right] {$C,D$};

 \draw(0,3) node[left] {$P$};
 \draw(0,0) node[left] {$S$};
 \draw(0,3.5) node[left] {$M_1$};
 \draw(0,4) node[left] {$R$};
 \draw(0,5) node[left] {$T$};

  \draw(6,3) node[right] {$P$};
 \draw(6,0) node[right] {$S$};
 \draw(6,3.5) node[right] {$M_1$};
 \draw(6,4) node[right] {$R$};
 \draw(6,5) node[right] {$T$};
 
  \draw(0,0) node[below] {$0$};
   \draw(2.78,0) node[below] {$a_1$};
    \draw(1.35,0) node[below] {$b_1$};
     \draw(3.55,0) node[below] {$c_1$};
     \draw(6,0) node[below] {$1$};
 
    \draw(3,5) node[above] {$b_1<a_1<c_1$};
 
 \draw[red,thick, loosely dashed] (0,0) -- (6,3.5);
 \draw[black,thick, densely dotted] (0,4) -- (6,3.5);
\draw[blue,thick,densely dashed] (0,3) -- (6,0);
\draw[yellow,thick] (0,5) -- (6,0);
 \draw[blue,loosely dotted] (2.78,0) -- (2.78,1.6);
 
  \draw[blue,loosely dotted] (1.35,0) -- (1.35,3.9);
   \draw[blue,loosely dotted] (3.55,0) -- (3.55,2);
\end{tikzpicture}
\caption{Change of payoffs (The variation of payoffs ($T>R>M_1>M_2>P>M_1^\prime= M_2^\prime=S$)}\label{fig:3}
\end{figure}
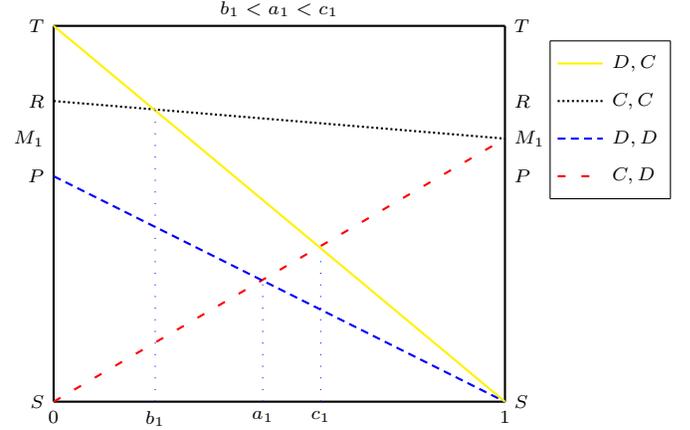


\begin{figure}[h]
\centering
\begin{tikzpicture}
  \draw[black,  thick] (0,0) -- (0,6);
    \draw[black,  thick] (2,0) -- (2,6);
      \draw[black,  thick] (4,0) -- (4,6);
        \draw[black,  thick] (6,0) -- (6,6);
          \draw[black,  thick] (0,0) -- (6,0);
    \draw[black,  thick] (0,2) -- (6,2);
      \draw[black,  thick] (0,4) -- (6,4);
        \draw[black,  thick] (0,6) -- (6,6);

 \draw(0,2) node[left] {$a_2$};
 \draw(0,0) node[left] {$0$};
 \draw(0,4) node[left] {$b_2$};
 \draw(0,6) node[left] {$1$};
 
  \draw(0,0) node[below] {$0$};
   \draw(2,0) node[below] {$a_1$};
    \draw(4,0) node[below] {$b_1$};
     \draw(6,0) node[below] {$1$};
 
\node at (1,5) {(D,C)};
\node at (1,3) {(D,D)};
\node at (1,1) {(D,D)};

\node at (3,5) {(C,C)};
\node at (3,3) {(C,C),(D,D)};
\node at (3,1) {(D,D)};

\node at (5,5) {(C,C)};
\node at (5,3) {(C,C)};
\node at (5,1) {(C,D)};
\end{tikzpicture}
\caption{The set of equilibria for each of the 9 generic regions ($b_1<a_1,\ b_2<a_2$)}\label{fig:4}
\end{figure}
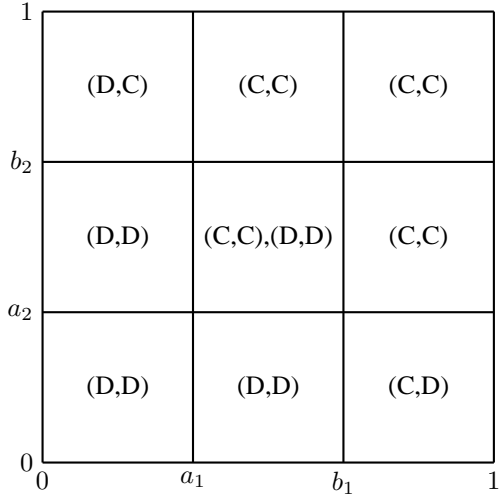

\begin{table}[h]\tiny\renewcommand{\arraystretch}{2}

\centering\begin{tabular} {|@{}c@{}|@{}c@{}|@{}c@{}|@{}c@{}|@{}c@{}|@{}c@{}|} \hline 
$a_2<\gamma\leq1$ & (D,C) & (D,C) & (C,C) & (C,C) & (C,C)   \\ \hline 
$\gamma=a_2$ & (D,C) & (D,C),(D,D),(C,C)  & (D,D),(C,C) & (D,D),(C,C)& (C,C)\\ \hline 
$b_2<\gamma<a_2$ & (D,D) & (D,D),(C,C) & (D,D),(C,C) & (D,D),(C,C)&(C,C)\\ \hline 
$\gamma=b_2$ & (D,D) & (D,D),(C,C) & (D,D),(C,C) & (C,D),(D,D),(C,C)& (C,D) \\ \hline 
$0\leq\gamma<b_2$ & (D,D) & (D,D) & (D,D) & (C,D) & (C,D) \\ \hline 
~ & $0\leq\lambda<b_1$&$\lambda=b_1$&$b_1<\lambda<a_1$&$\lambda=a_1$&$a_1<\lambda\leq1$\\ \hline
\end{tabular}\caption{ The equilibria for different social coefficients for $b_1<a_1$ and $b_2<a_2$}\label{tab:9}
\end{table}

\subsection{Double game with incomplete information}

\noindent  We now assume that the players do not know each other's social coefficients prior to any game, which means that they do not know the full values of the payoff matrix. Thus, the game has two-sided \textit{incomplete information}. We assume that the social coefficient of each player has a finite number of possible values and that the probability distribution of the social coefficient is common knowledge between the two players. We also assume that, at the start of the game, each player is aware of the value of their own social coefficient (\textit{private information}) or {\em type}, but not the value of the social coefficient of the opponent. In that way, each player relies on a probabilistic inference to predict the opponent's actions. From TABLES~\ref{tab:8} and~\ref{tab:9}, we see immediately that for extreme types $\lambda,\gamma=0,1$, we have the pure NE $(D,D)$, $(C,D)$, $(D,C)$ and $(C,C)$ and it follows immediately that in both cases of $a_1<b_1$, $a_2<b_2$, and  $b_1<a_1$, $b_2<a_2$, we have pure regular games.

We present two specific examples with finite sets of types for the two players, one of which is actually the basis for the round-robin tournament in the next section as well. Assume\begin{center}
$T>R>M>P>M^\prime=S$ \, and \, $a_1<b_1$, $a_2<b_2$\end{center}

\subsubsection{Example I}

We choose the four discrete values, or types,  

\begin{IEEEeqnarray}{rCl}
\lambda_1&=&0, \,\lambda_2=a_1, \,\lambda_3= b_1, \,\lambda_4= 1\nonumber\\
\gamma_1&=&0, \,\gamma_2=a_2, \,\gamma_3= b_2, \,\gamma_4= 1\nonumber
\end{IEEEeqnarray}

Note that $\lambda_2,\gamma_2,\lambda_3,\gamma_3$ give the values at the boundaries of the three generic regions in the unit square with each other. Table~\ref{tab:10} gives the set of Nash equilibria for all possible pairs $(\lambda_m,\gamma_n)$.

\begin{table}[ht]\footnotesize\renewcommand{\arraystretch}{2}
\centering\tiny\begin{tabular}{|c|c|c|c|c|}\hline
$\gamma_4$ &  (D,C) & (D,C) & (C,C),(D,C) & (C,C)   \\ \cline{1-5}
$\gamma_3$ & (D,C) & (C,D),(D,C)  & (C,D),(D,C),(C,C)& (C,C),(C,D) \\ \cline{1-5}
$\gamma_2$ & (D,D),(D,C) & (D,D),(D,C),(C,D) & (C,D),(D,C)& (C,D) \\\cline{1-5} 
$\gamma_1$ & (D,D) & (D,D),(C,D) & (C,D)&(C,D) \\ \hline 
  & $\lambda_1$ & $\lambda_2$ & $\lambda_3$ & $\lambda_4$ \\ \hline  
\end{tabular}\caption{ The equilibria for different social coefficients for $a_1<b_1$ and $a_2<b_2$ with four types per player}\label{tab:10}
\end{table}

From this table, we see that for pairs of types where there is a choice of pure NE, we can choose a pure NE such that we obtain TABLE~\ref{tab:11}

\begin{table}[ht]\footnotesize\renewcommand{\arraystretch}{2}
\renewcommand{\arraystretch}{2}
\tiny\begin{center}
\begin{tabular}{|c|c|c|c|c|}\hline

$\gamma_4$ &  (D,C) & (D,C) & (C,C) & (C,C)   \\ \cline{1-5}
$\gamma_3$ & (D,C) & (D,C)  & (C,C)& (C,C) \\ \cline{1-5}
$\gamma_2$ & (D,D) & (D,D)& (C,D)& (C,D) \\\cline{1-5} 
$\gamma_1$ & (D,D) & (D,D)& (C,D)&(C,D) \\ \hline 
  & $\lambda_1$ & $\lambda_2$ & $\lambda_3$ & $\lambda_4$ \\ \hline  
\end{tabular}\caption{{ The equilibria  chosen for different social coefficients from TABLE~\ref{tab:10}}}\label{tab:11}
\end{center}
\end{table}

From the above, we see that the DG is completely pure regular with $(DDCC,DDCC)$ as a pure Bayesian NE.

\subsubsection{Example II}

We take 5 discrete values, or types for each player as follows,  
\begin{IEEEeqnarray}{rCl}
\lambda_1&=&0, \,\lambda_2=a_1, \,\lambda_3= \frac{a_1 + b_1}{2}, \,\lambda_4= b_1, \,\lambda_5= 1\nonumber\\
\gamma_1&=&0, \,\gamma_2=a_2, \,\gamma_3= \frac{a_2 + b_2}{2}, \,\gamma_4= b_2, \,\gamma_5= 1\nonumber
\end{IEEEeqnarray}

Table~\ref{tab:12} gives the set of Nash equilibria for all possible pairs $(\lambda_i,\gamma_i)$. 
\begin{table}[ht]\footnotesize\renewcommand{\arraystretch}{2}
\renewcommand{\arraystretch}{2}
\tiny\begin{center}
\begin{tabular}{|c|c|c|c|c|c|}\hline

$\gamma_5$ &  (D,C) & (D,C) & (D,C) & (C,C),(D,C) & (C,C)   \\ \cline{1-6}
$\gamma_4$ & (D,C) & (C,D),(D,C)  & (C,D),(D,C)& (C,D),(D,C),(C,C)& (C,C),(C,D) \\ \cline{1-6}
$\gamma_3$ & (D,C) & (C,D),(D,C) & (C,D),(D,C) & (C,D),(D,C)&(C,D) \\ \cline{1-6}
$\gamma_2$ & (D,D),(D,C) & (D,D),(D,C),(C,D) & (C,D),(D,C)& (C,D),(D,C)& (C,D) \\\cline{1-6} 
$\gamma_1$ & (D,D) & (D,D),(C,D) & (C,D) & (C,D)&(C,D) \\ \hline 
  & $\lambda_1$ & $\lambda_2$ & $\lambda_3$ & $\lambda_4$ & $\lambda_5$ \\ \hline  
\end{tabular}\caption{{ The equilibria for different social coefficients for $a_1<b_1$, $a_2<b_2$ with five types per player. }}\label{tab:12}
\end{center}
\end{table}

From this table, we can see that the DG is not completely pure regular. We will use this DG in the next section for the repeated PD game.

\section{Round Robin Tournament Analysis}
\noindent A computer tournament of the DG has been implemented to operate as a framework,  where we could test the validity of the theoretical results, by comparing the performance of various competing strategies. The structure of the tournament is round-robin, and, thus, all the strategies compete in the iterated double game against all the other strategies and themselves once. Each game between any two strategies consists of 200 rounds and the total score of a strategy in a game is the sum of the payoffs acquired from all the rounds. The average score of a strategy accounts for its robustness and stability, since the amount of strategies ensures the existence of a rather competitive environment. The numerical values we use for the payoff values are the following: 
\begin{displaymath}
T = 5, \,\, R = 3, \,\,  M_1 = M_2 = 2.5, \,\,  P = 1, \,\,  M'_1 = M'_2 = S = 0 \,\, 
\end{displaymath}
These values are chosen precisely to create an environment, where a real dilemma is present. Given the aforementioned values for the payoffs, a strategy can score between 0 and 1000 in a single game. 

A score of 0 can be obtained by either a strategy that has a social coefficient equal to 0 and cooperates throughout the whole game, while the opponent only defects, or a strategy that has a social coefficient equal to 1 and defects throughout the whole game, irrespective of what its opponent does. 

On the other hand, a score of 1000 can only be obtained by a strategy that has a social coefficient equal to 0 and defects throughout the whole game, while the opponent only cooperates. 

A score of 200 can be obtained by two strategies that have social coefficients equal to 0 and mutually defect throughout the whole game. 

A score of 500 can be obtained by a strategy that has a social coefficient equal to 1 and cooperates throughout the whole game, irrespective of what the opponent does. 

Finally, a score of 600 can be obtained by two strategies that have social coefficients equal to 0 and mutually cooperate throughout the whole game. 

The scores for a single game described here give us an idea of what to expect from the games, and, as we will see, the winner of the tournament has an average score of higher than 600.

A significant aspect of the tournament is the selection of a social coefficient by a strategy. As mentioned, we allow the social coefficient to be part of the discrete set of five distinct values, that we used in the theoretical analysis as in example II,

\begin{displaymath}
0, \,\, a, \,\, \frac{a + b}{2}, \,\, b, \,\, 1
\end{displaymath}

The strategies are allowed to change their social coefficients within the game and adapt them to the environment they face. This is done due to the fact that the repeated DG for the PD is assumed to reflect a range of situations, spanned over a finite interval of time, represented by the number of rounds of the game. As a result, since human beings can change their social values, according to the experiences that they acquire, it seems realistic to have a varying social coefficient for the repeated version of the DG for the PD. 

However, since it is difficult to quantify exactly by how much human beings change their social values, we only allow the strategies to change their social coefficients stepwise, and, as a result, they can either increase them or decrease them by one value at any round. This is done to avoid having strategies changing their social coefficients from a value of 0 to that of 1 in a single round, since, it is believed that, only under extreme and unprecedented circumstances would such a sudden change occur in one's social values.

The strategies participating in the tournament vary in ways such as their algorithmic complexity, the choice they make for the first round of the game and their initial social coefficient. Some of them take into account the decisions that their opponent has made up to the point of consideration in the game, some use probabilistic estimations and even randomness in making their decisions, some have already made up their mind and follow rules that do not change according to the flow of the game, some make use of Bayesian inference and others use the $(\lambda, \gamma)$ diagrams to make their decisions according to the Nash equilibria indications. In essence, a strategy consists of an algorithm and so it operates according to certain instructions, changes the social coefficient and provides the decision of whether to cooperate or defect.

The initial social coefficient of a strategy shows its intentions, since a low social coefficient usually implies proneness to defection, while a high social coefficient on the other hand cooperation. Varying initial social coefficients across tournaments mean changing initial conditions, and, as a result, dynamic environments. Certain strategies have complex ways for dealing with their opponents' initial behaviour, and, so, what they may infer from it, may, in some cases, pre-determine the rest of the course of the game. In addition, most strategies have algorithms that modify their social coefficient in almost every round, thus, enabling them to achieve in the course of the game their optimal social coefficient for a particular game and adapt to the environment that has been developed from their opponents' actions. Then, they can respond effectively to both cooperative and defective behaviours and not be restricted by their choice of initial social coefficient.

As mentioned in the theoretical part of the analysis of the DG, different kinds of behaviour are observed when $\lambda$ and $\gamma$ change. 

With $0 \le \lambda \le a$ and $0 \le \gamma \le a$, the Nash equilibrium is provided by (D, D), i.e. mutual defection. So, for such social coefficients, we expect to see defective behaviours as strategies try to recognise their opponents' intentions and see whether they can get away with defection, or they will face retaliatory behaviour. 

With $a < \lambda < b$ and $a < \gamma < b$, the Nash equilibria are provided by (C, D) and (D, C), i.e. a game of brinkmanship. So, for such social coefficients, the player who defects first has an advantage and dominates by gaining from an opponent's cooperation. However, with the increase of the social coefficient, cooperating can be beneficial, since a player can gain the reward from the SG. In that way, a lot of strategies change their behaviour at this stage and employ a cooperative approach to the game. 

With $b \le \lambda \le 1$ and $b \le \gamma \le 1$, the Nash equilibrium is provided by (C, C), i.e. mutual cooperation. So, for such social coefficients, strategies with cooperative behaviour can gain the social rewards and not suffer the social punishments from the SG. The various stages of the social coefficients' changes that alter the form of the DG are summarised below.\\

\begin{itemize}
\item $0 \le \lambda \le a,\,\, 0 \le \gamma \le a$:\\ Nash equilibria: (D, D)\\  PD
\item $a < \lambda < b, \,\, a < \gamma < b$:\\ Nash equilibria: (C, D), (D, C)\\ Chicken Game / Game of Brinkmanship
\item $b \le \lambda \le 1, \,\, b \le \gamma \le 1$:\\ Nash equilibria (C, C)\\ Cooperation Game
\end{itemize}

\section{SEG}

\noindent The results of the tournament showed a clear winning strategy, whose average and cumulative scores were much higher than those of any other participating strategy. Its algorithm is a mixture of the results of the theoretical work and some conditions on how to alter its social coefficient, so as to adapt to the course of action of any game. This strategy was devised to take the maximum benefits of a game, irrespective of the nature of the opposing strategies. 

We call it SEG and explain its functionality as follows. SEG is based on two parts; deciding whether to cooperate or defect and altering its social coefficient based on some pre-defined conditions. For the former, it looks up the $(\lambda, \gamma)$ diagram and behaves as the Nash equilibria indicate, thus, its decision of whether to cooperate or defect depends only on the theoretical work and the results drawn from it. For the latter, it changes its social coefficient according to the following conditions:
\begin{itemize}
\item If SEG chose C and its opponent chose C in the previous round, it does not change its social coefficient.
\item If SEG chose C and its opponent chose D in the previous round, it increases its social coefficient.
\item If SEG chose D and its opponent chose C in the previous round, it decreases its social coefficient.
\item If SEG chose D and its opponent chose D in the previous round, it increases its social coefficient.
\end{itemize}

It should be also mentioned that its initial social coefficient is 0, the result of which is that in the first round it defects, since this is indicated by the Nash equilibria in the $(\lambda, \gamma)$ diagram. 

Its success can be attributed to several reasons. It works on the principle of adjusting its social coefficient based not only on its opponent's behaviour, but its own as well. If its opponent defected and it either cooperated or defected in the previous round, it increases its social coefficient to avoid the disastrous cycles of mutual defection, that would be caused by a low social coefficient, since that would be the indication from the Nash equilibria in the $(\lambda, \gamma)$ diagram.
In that way, SEG, instead of trying to alter its opponent's behaviour by retaliating to its defections, realises that it is better off by increasing its social coefficient to get the social rewards from cooperation. If its opponent cooperated and it defected in the previous round, it decreases its social coefficient, as it sees that the opponent does not retaliate to its defections, so it exploits the situation and behaves as instructed from the Nash equilibria in the $(\lambda, \gamma)$ diagram, so as to get the temptation's payoff to defect. In a sense it exploits the opponent in that particular case, however, it is the opponent who allows that behaviour by not retaliating. In such cases, some strategies allow themselves to be taken advantage of, because, if they have a high social coefficient, they can get the social rewards from cooperation and not concern themselves about their opponents' actions. Finally, if both its opponent and it cooperated in the previous round, it keeps its social coefficient, so as to make this cooperation stable and sustainable.

Although SEG seems a strategy that promotes defective behaviour, that is not the case, as with all the strategies that are cooperative and do not let themselves to be taken advantage of, cooperation emerges quickly. Also, in its choice of whether to cooperate or defect, it behaves as specified by the Nash equilibria in the $(\lambda, \gamma)$ diagram, so it does not behave oddly and unrealistically, such as defecting with a high social coefficient or cooperating with a low social coefficient. It should be also noted that in the points, where there exist multiple Nash equilibria, and there is a choice between cooperating and defecting, we follow our theoretical analysis and select defection, as this potentially maximises the payoff, given the particular payoff values.

Another fact that shows the success of SEG is that we conducted two separate tournaments, where we changed the values of M, the social rewards payoff for cooperation, to 3 and 2, respectively, to see whether the value of 2.5, which was our initial selection, biased the results. The result was that, in both cases, SEG was the winning strategy of the tournament.

The rankings, the total scores of the 10 top-scoring strategies,  the initial social coefficients and their decisions of whether to cooperate or to defect in the first round (when the choice is specified by the Nash equilibria of the $(\lambda, \gamma)$ diagram, we write NE) are presented in TABLE~\ref{tab:T10SSTSISCAIC}.\footnote{\label{fn:1}Please note that the algorithms of the strategies participating in the round-robin tournament are not presented here due to space constraints.}

\begin{table}[h!]\renewcommand{\arraystretch}{1.3}\scriptsize
\centering
\begin{tabular}{c c c c c} \hline
Ranking & Strategy & Total Score & Initial Social Coeff. & Initial Choice\\ \hline
1 & SEG & 121536.49 & 0.00 & NE\\
2 & ANE & 114694.48 & 0.00 & NE\\
3 & BNE & 113751.58 & 0.00 & NE\\
4 & ADAPTIVE & 112387.83 & 0.00 & C/D\\
5 & TQC & 109570.71 & 0.29 & D\\
6 & MTFT & 108649.00 & 0.00 & D\\ 
7 & MIXED & 103372.31 & 1.00 & C\\
8 & SD & 97643.53 & 0.44 & C/D\\
9 & DTFT & 97330.00 & 0.00 & C\\ 
10 & TFT & 96805.00 & 0.00 & C\\ [1ex] \hline
\end{tabular}
\caption{Top 10 Scoring Strategies, Total Scores, Initial Social Coefficients and Initial Choices}
\label{tab:T10SSTSISCAIC}
\end{table}
\normalsize
The rankings and the average scores of the 10 top-scoring strategies are presented in table~\ref{tab:T10SSAS}.\footnote{Same as 1.}

\begin{table}[h!]\renewcommand{\arraystretch}{1.3}
\centering
\begin{tabular}{c c c} \hline
Ranking & Strategy & Average Score\\ \hline
1 & SEG & 633.00\\
2 & ANE & 597.37\\
3 & BNE & 592.46\\
4 & ADAPTIVE & 585.35\\
5 & TQC & 570.68\\
6 & MTFT & 565.88\\
7 & MIXED & 538.40\\
8 & SD & 508.56\\
9 & DTFT & 506.93\\ 
10 & TFT & 504.19\\ [1ex] \hline
\end{tabular}
\caption{Top 10 Scoring Strategies, Average Scores}
\label{tab:T10SSAS}
\end{table}

Let us now isolate some of the games of SEG and examine them, so that its strengths in getting the maximum from a game, will become apparent.\\
\begin{itemize}
\item SEG vs. ALLD:\\

ALLD is a strategy that has an initial social coefficient equal to 0, it constantly chooses D and never changes its social coefficient.\\
The game decisions between the two are as follows:\\

SEG: \;\, D \, D \, D \, C \, D \, C \, C \, C \, \! C \, C \dots \! C \dots \! C\\
ALLD: D \, D \, D \, D \, D \, D \, D \, D \, D \, D \dots \! D \dots \! D\\

As we can see, since SEG initially has a social coefficient equal to 0, it chooses what the Nash equilibria of the $(\lambda, \gamma)$ diagram indicate, which, for such a social coefficient, is D. However, although other strategies would keep choosing D against ALLD, and thus, get hurt from mutual defection, the algorithm of SEG makes it increase its social coefficient, and, from the seventh round and onwards, it has a very high social coefficient, and, thus, chooses C for the rest of the game. It gets the maximum possible of points from a clearly uncooperative strategy like ALLD.\\

\item SEG vs. ALLC:\\

ALLC is a strategy that has an initial social coefficient equal to 1, it constantly chooses C and never changes its social coefficient.\\
The game decisions between the two are as follows:\\

SEG: \; D \, D \, D \, D \, D \, D \, D \, D \, D \, D \dots \! D \dots \! D\\
ALLC: C \, C \, C \, C \, C \, C \, \! C \, \! C \, C \, \! C \dots \! C \dots \! C\\

As we can see, since SEG initially has a social coefficient equal to 0, it chooses what the Nash equilibria of the $(\lambda, \gamma)$ diagram indicate, which, for such a social coefficient, is D. Then, it realizes that ALLC will not retaliate to its defections and keeps choosing D, so that it can constantly acquire the temptation to defect payoff. In some sense, it exploits the unwillingness of ALLC to retaliate, however, ALLC gets the social reward for cooperation, so it is not entirely taken advantage of, since a high social coefficient allows a strategy to be indifferent towards the opponent's actions. Again, SEG gets the maximum possible of points from a cooperative, that does not retaliate, strategy like ALLC.\\

\item SEG vs. TFT:\\

TFT is a strategy that has an initial social coefficient equal to 1, it initially chooses C, and for the rest of the game chooses what its opponent chose in the last round. It does not change its social coefficient.\\
The game decisions between the two are as follows:\\

SEG: \! D \, D \, D \, D \, C \, C \, C \, C \, C \, C \dots \! C \dots \! C\\
TFT: \, C \, D \, D \, D \, D \, C \, C \, C \, C \, C \dots \! C \dots \! C\\

As we can see, since SEG initially has a social coefficient equal to 0, it chooses what the Nash equilibria of the $(\lambda, \gamma)$ diagram indicate, which, for such a social coefficient, is D. TFT initially chooses C, but then retaliates to the defection of SEG and chooses D. Thus, although after the first round, SEG decreased its social coefficient, after the second round and the retaliatory behaviour of TFT, it increases its social coefficient, so as to avoid the destructive consequences of mutual defection. Then, after the fifth round, it has increased its social coefficient enough, so that the Nash equilibria indicate C as the choice for the next round. TFT reciprocates both cooperation and defection, and, thus, chooses C in the next round, as well. Therefore, for the remaining rounds, both strategies choose C, and, cooperation based on reciprocity has emerged. Again, SEG gets the maximum possible points from an opponent that is cooperative, but protects itself, and retaliates to defection, as well.
\end{itemize}

The conclusion to be drawn is that SEG seems to be a strategy that performs well under numerous environments and it shows signs of robustness and stability. It successfully puts the theoretical results into practice, since it uses the Nash equilibria, single or multiple, of the $(\lambda, \gamma)$ diagrams to make its decisions. It uses its social coefficient to promote cooperation based on reciprocity, to protect itself from purely defective strategies, and to exploit situations, where it realises that there is no willingness for retaliation from the opponent. Finally, although it starts by choosing D and by having a low initial social coefficient, it can quickly change its behaviour, should it see signs for cooperation based on reciprocity from the opponent's side.

\section{Conclusion}

 In this paper, we have introduced Multi-Games (MG) which can be used when a number of players need to divide up their resources according to different weights to a given number of games which are then played out simultaneously. For the case of a two-player Double Game (DG) in which each player has the same set of strategies for the two basic games, we have shown that when the possible weights for each player belong to a finite set of types, it is possible to decide in linear time if there is a set of local pure NE for all pairs of types, for which the strategy of each player only depends on its own type. In such a case we immediately obtain a Bayesian pure NE for the DG. 

This result, in future work, will be extended to mixed NE on the one hand and to any N-player MG on the other. A challenging question is if we can reduce the complexity of computing a Bayesian NE for a pure regular (but not completely pure regular) MG. 

We also extended the classical PD with a Social Game (SG) to obtain a DG which can model the prosocial behaviour of human agents. We showed that two types of DG are possible with some reasonable assumptions about the SG. For one of these two possibilities, we provided two examples for DG with finite sets of types for the players, of which the first was completely pure regular, with a pure Bayesian NE which was directly determined, while the second was not.  
  
Finally, we conducted a tournament consisting of various strategies for the repeated DG of the second example above. This provided a testing framework for our work, and we devised a winning strategy. In this repeated version of the DG for future work, we could keep, in each round of a game, separate scores  for the material and social components of each strategy. Then at the end of the final game, we can take a convex combination of the two total scores by applying a global weight, which would be determined by society as a whole. This we believe would add to the degree of realism that the DG aims to achieve.

\bibliography{bibliography}

\begin{thebibliography}{10}
\providecommand{\url}[1]{#1}
\csname url@samestyle\endcsname
\providecommand{\newblock}{\relax}
\providecommand{\bibinfo}[2]{#2}
\providecommand{\BIBentrySTDinterwordspacing}{\spaceskip=0pt\relax}
\providecommand{\BIBentryALTinterwordstretchfactor}{4}
\providecommand{\BIBentryALTinterwordspacing}{\spaceskip=\fontdimen2\font plus
\BIBentryALTinterwordstretchfactor\fontdimen3\font minus
  \fontdimen4\font\relax}
\providecommand{\BIBforeignlanguage}[2]{{%
\expandafter\ifx\csname l@#1\endcsname\relax
\typeout{** WARNING: IEEEtran.bst: No hyphenation pattern has been}%
\typeout{** loaded for the language `#1'. Using the pattern for}%
\typeout{** the default language instead.}%
\else
\language=\csname l@#1\endcsname
\fi
#2}}
\providecommand{\BIBdecl}{\relax}
\BIBdecl

\bibitem{neumann1944}
J.~Neumann and O.~Morgenstern, \emph{Theory of Games and Economic
  Behavior}.\hskip 1em plus 0.5em minus 0.4em\relax Princeton University Press,
  1944.

\bibitem{Nash1950}
J.~Nash, ``Equilibrium points in n-person games,'' in \emph{The National
  Academy of Sciences of the United States of America (1950) Key:
  citeulike:916898}, 1950.

\bibitem{Harsanyi1995}
J.~Harsanyi, ``Game with incomplete information,'' \emph{The American Economic
  Review}, vol. Vol. 85, No. 3, pp. pp. 291--303, 1995.

\bibitem{Daskalakis06}
\BIBentryALTinterwordspacing
C.~Daskalakis, P.~W. Goldberg, and C.~H. Papadimitriou, ``The complexity of
  computing a nash equilibrium,'' in \emph{Proceedings of the thirty-eighth
  annual ACM symposium on Theory of computing}, ser. STOC '06.\hskip 1em plus
  0.5em minus 0.4em\relax New York, NY, USA: ACM, 2006, pp. 71--78. [Online].
  Available: \url{http://doi.acm.org/10.1145/1132516.1132527}
\BIBentrySTDinterwordspacing

\bibitem{Etessami10}
K.~Etessami and M.~Yannakakis, ``On the complexity of nash equilibria and other
  fixed points,'' \emph{SIAM J. Comput.}, vol.~39, no.~6, pp. 2531--2597, 2010.

\bibitem{OB2007}
E.~Ostrom, ``Biography of {R}obert {A}xelrod,'' \emph{PS: Political Science and
  Politics}, vol.~40, no.~01, pp. 171 -- 174, 2007.

\bibitem{SG1970}
M.~Shubik, ``Game theory, behavior, and the paradox of the prisoner's dilemma:
  Three solutions,'' \emph{The Journal of Conflict Resolution}, vol.~14, no.~2,
  pp. 181 -- 193, 1970.

\bibitem{AE1980}
R.~Axelrod, ``Effective choice in the prisoner's dilemma,'' \emph{The Journal
  of Conflict Resolution}, vol.~24, pp. 3 -- 25, 1980.

\bibitem{AM1980}
------, ``More effective choice in the prisoner's dilemma,'' \emph{The Journal
  of Conflict Resolution}, vol.~24, pp. 379 -- 403, 1980.

\bibitem{The84}
------, \emph{\BIBforeignlanguage{english}{The Evolution of
  Cooperation}}.\hskip 1em plus 0.5em minus 0.4em\relax New York,: Basic Books,
  1984.

\bibitem{Gintis2009}
H.~Gintis, \emph{The Bounds of Reason: Game theory and the unification of the
  behavioral sciences}.\hskip 1em plus 0.5em minus 0.4em\relax Princeton
  University Press, 2009.

\bibitem{wright2000}
R.~Wright, \emph{Nonzero: History, Evolution \& Human Cooperation: The Logic of
  Human Destiny}.\hskip 1em plus 0.5em minus 0.4em\relax Abacus, 2001.

\bibitem{wright2009}
------, \emph{The Evolution of God: The Origins of Our Beliefs}.\hskip 1em plus
  0.5em minus 0.4em\relax Abacus, 2010.

\bibitem{sheng94}
C.~L. Sheng, ``A note on the prisoner's dilemma,'' \emph{Theory and Decision},
  pp. 36, pp.233--246, 1994.

\bibitem{Chen2011}
P.-A. Chen, B.~de~Keijzer, D.~Kempe, and G.~Sch{\"a}fer, ``The robust price of
  anarchy of altruistic games,'' \emph{CoRR}, vol. abs/1112.3680, 2011.

\bibitem{Harsanyi1967-8}
J.~Harsanyi, ``Games with incomplete information played by bayesian players,''
  \emph{Management Science}, vol.~14, pp. 159--82\,320--34, 486--502, 1967-8.

\bibitem{zz}
B.~L. Slantchev, ``\BIBforeignlanguage{english}{Game theory: Static and dynamic
  games of incomplete information},'' 2008.

\bibitem{AHT1981}
R.~Axelrod and W.~D. Hamilton, ``The evolution of cooperation,'' \emph{The
  Journal of Conflict Resolution}, vol. 211, pp. 1390 -- 1396, 1981.

\end{thebibliography}

\end{document}